 \newtheorem{theorem}{Theorem}
 \newenvironment{proof}{\trivlist \item[\hskip \labelsep{\bf proof.\/}]}{\hspace{\fill}$\Box$\endtrivlist }
\title{Outsourced Privacy-Preserving Feature Selection Based on Fully Homomorphic Encryption}
\author{
Koki Wakiyama
\And
Tomohiro I
\And
Hiroshi Sakamoto\\ \\
Kyushu Institute of Technology, 680-4 Kawazu, Iizuka, Fukuoka 820-8502, Japan \\
\texttt{wakiyama.koki809@mail.kyutech.jp}\\
  \texttt{ \{tomohiro, hiroshi\}@ai.kyutech.ac.jp} 
}
\begin{document}
\maketitle
\begin{abstract}
Feature selection is a technique that extracts a meaningful subset from a set of features in training data. When the training data is large-scale, appropriate feature selection enables the removal of redundant features, which can improve generalization performance, accelerate the training process, and enhance the interpretability of the model. This study proposes a privacy-preserving computation model for feature selection. Generally, when the data owner and analyst are the same, there is no need to conceal the private information. However, when they are different parties or when multiple owners exist, an appropriate privacy-preserving framework is required. Although various private feature selection algorithms, they all require two or more computing parties and do not guarantee security in environments where no external party can be fully trusted. To address this issue, we propose the first outsourcing algorithm for feature selection using fully homomorphic encryption. Compared to a prior two-party algorithm, our result improves the time and space complexity $O(kn^2)$ to $O(kn\log^3n)$ and $O(kn)$, where $k$ and $n$ denote the number of features and data samples, respectively. We also implemented the proposed algorithm and conducted comparative experiments with the naive one. The experimental result shows the efficiency of our method even with small datasets. 
\end{abstract}

\section{Introduction}

Feature selection is the process of identifying and selecting useful features (variables or attributes) from a dataset with the aim of improving the performance of predictive models and optimizing learning efficiency~\cite{Chandrashekar2014}. Consider a scenario in which a target variable is to be predicted using a large number of potential explanatory variables, with the goal of building a simple yet effective predictive model from a vast set of observations. For instance, in semiconductor manufacturing, more than tens of thousands of external factors can influence the production process from silicon material to the final product. If it is possible to identify in advance which of these factors are critical to product quality, then it is sufficient to construct a classification model based only on those selected features---offering a range of benefits. In practice, such preprocessing can lead to improvements in model accuracy, reductions in training time, and enhanced interpretability of the model.
Many feature selection algorithms based on various criteria have been proposed to date~\cite{Pawlak1991,Liu1998,Zhao2007,Dash2003,Arauzo-Azofra2008,Shin2011,Shin2017}.

In this study, we consider the feature selection in situations where the training data contains privacy-sensitive information. When the data owner who needs a reasonable learning model and the analyst required to construct the model are the same, privacy is generally not a concern as long as there is no data leakage. However, our focus is on privacy-preserving computation in cases where these two parties are separate. Existing frameworks to address this issue include techniques such as {\em secret sharing}~\cite{Agarwal2024} and {\em differential privacy}~\cite{Dick2023}. In this work, from the perspective of outsourcing, we focus on an approach based on {\em fully homomorphic encryption} (FHE)~\cite{Gentry2009}, which is a form of public-key cryptosystem.

In secret sharing schemes, data is split and distributed among parties, which introduces a risk of data leakage if those parties colludes maliciously. Differential privacy, on the other hand, prevents leakage by injecting random noise into the data. However, there exists a trade-off between the amount of noise added and the quality of the learning results, and it cannot guarantee complete protection against data leakage.

By contrast, FHE enables all arithmetic and comparison operations to be performed directly on encrypted data. Since only the data owner possessing the secret key can decrypt the final result, the privacy guarantees are as strong as the underlying public-key encryption. The primary limitations of FHE have been computational inefficiency and the challenge of outsourcing computation in scenarios involving multiple data owners. However, recent advancements such as TFHE~\cite{Chillotti2020} and multi-key homomorphic encryption schemes~\cite{Chen2019} are steadily overcoming these obstacles.

Therefore, we consider a simplified privacy-preserving problem for feature selection: The data owner retains the training data along with a pair of generated public and private keys. Then, the analyst receives the encrypted training data and performs feature selection using FHE scheme. The result of the computation is then returned to the data owner, who decrypts it using the private key.

Note that the proposed method constitutes a fully outsourced 
computation~\cite{Liu2015,Qiu2022} with only a single round of communication between the data owner and the analyst. This aspect significantly strengthens privacy protection. In non-fully outsourced scenarios, multiple interactions or queries from a party may be required. During such interactions, partial decryption of maliciously crafted messages from untrusted analyst could lead to unintentional leakage of the data owner's private information.

To prevent this, conventional approaches typically rely on assumptions that constrain the behavior of the analysts. For example, the {\em semi-honest} model assumes that the analyst follows the protocol faithfully but may attempt to extract as much information as possible from the data he receives. However, imposing such assumptions compromises practical applicability. In contrast, our algorithm requires only single round communication without decryption and ensures that the risk of data leakage depends solely on the strength of the cryptosystem. 
The TFHE~\cite{Chillotti2020} used in this study is proven to be at least secure against chosen-plaintext attacks (IND-CPA secure).
Consequently, it offers a significant advantage in terms of both security and practical usability compared to existing methods.
\section{Related Works and Our Contributions}
Feature selection is broadly categorized into filter method~\cite{Saeys2007}, wrapper method~\cite{Guyon2003}, and embedded method~\cite{Breiman2001}. Among these, the filter method evaluates the importance of features based on information theory, making it faster than other methods and highly generalizable due to its independence from specific models. While evaluation metrics such as mutual information are commonly used in filter methods~\cite{Peng2005}, this study focuses on feature selection based on a recently proposed {\em consistency measure}~\cite{Liu1996}. Feature selection algorithms using consistency measures have been shown to achieve both high computational efficiency and high classification accuracy on large-scale real-world datasets~\cite{Shin2011,Shin2017}. In contrast, feature selection algorithms based on wrapper or embedded methods are generally impractical for secure computation due to the high computational cost associated with searching for optimal feature subsets or training models. Therefore, this study proposes a consistency measure-based secure feature selection.

Homomorphic encryption (HE) is a framework for secure computation that leverages the homomorphic properties of public-key cryptosystems. Specifically, if encrypted integers can be added without decryption, the cryptosystem is said to be \textit{additive}. Furthermore, if it supports both operations, it is called FHE, e.g.,
RSA~\cite{RSA1978}, the first public-key cryptosystem, is multiplicative but not additive. Additionally, RSA is not a {\em probabilistic} encryption scheme---i.e., it does not produce different ciphertexts for the same plaintext in each execution---making it vulnerable to chosen-plaintext attacks.
The first probabilistic HE~\cite{Goldwasser1984} was capable of computing the sum (i.e., bitwise XOR) of encrypted bits. Subsequently, a scheme enabling the addition of integers was proposed~\cite{Paillier1999}. Later, the first HE capable of both addition and multiplication was introduced~\cite{Boneh2005}, but it allowed only single multiplication operation in overall computation, limiting its practicality for secure computation.
Eventually, the first FHE, which imposes no restrictions on the number of operations, was proposed~\cite{Gentry2009}, making arbitrary secure computations theoretically possible. In practice, fast FHE libraries have become available~\cite{TFHE}, expanding its applicability. The implementation of the algorithm proposed in this study also utilizes such a library.

Although several private feature selection methods have been proposed to date~\cite{Rao2019,Li2021,Akavia2023,Wang2025}, only a few have addressed privacy-preserving computation for the consistency measure~\cite{Ono2022}, and this study is the first to attempt a fully outsourced computation for it. This is because the computation of the optimal consistency measure reduces to the {\em set cover problem} known to be NP-hard~\cite{Garey1979}. Consequently, approximation algorithms and acceleration techniques for solving this optimization problem have been proposed~\cite{Shin2011,Shin2017}, and in this study, we realize a secure fully outsourced computation of such an algorithm.

A core technique in this approximate method is preprocessing via sorting. By sorting the entire dataset with feature values as keys, it becomes possible to efficiently narrow down the important features, enabling a heuristic approach to approximate optimal feature selection. However, executing this algorithm over encrypted data is not straightforward. In general, 
the lower bound for comparison-based sorting algorithms is $\Omega(n \log n)$, but no secure outsourced computation method is currently known that achieves this lower bound. At present, the fastest known method uses sorting networks with a time complexity of $O(n \log^2 n)$~\cite{Batcher1968}, which is also adopted in this study as preprocessing.
It should be noted, however, that if secret sharing among two or more parties is used, faster privacy-preserving sorting becomes feasible~\cite{Hamada2014}. Moreover, disregarding practical constraints, it is theoretically possible to construct sorting networks with $O(n \log n)$ time complexity~\cite{Ajtai1983}. Nonetheless, both of these approaches are outside the scope of this study.

Table~\ref{tab1} below summarizes HE-based private feature selection algorithms. Here, $k$ and $n$ denote the number of features and data samples, respectively.
A filter method~\cite{Rao2019} adopts the $\chi^2$ statistic as its evaluation criterion and employs additively homomorphic encryption~\cite{Paillier1999}. While this method has a low offline computational cost, its communication overhead is greater than that of the proposed method. Moreover, since it assumes two-party computation, it does not support fully outsourced computation. Additionally, it has not been implemented. Another two-party protocol~\cite{Ono2022} uses the same consistency measure as the proposed method and adopts the fastest FHE~\cite{Chillotti2020}. Although it also achieves exactly single round complexity as our the method, it assumes a two-party setting and therefore does not support fully outsourced computation.

\begin{table}[H]
\caption{Comparison of HE-based feature selection algorithms.\label{tab1}}
\begin{tabularx}{\textwidth}{cccccc}
\hline
\;\;\;\;\;\;		\textbf{Algorithm}\;\;\;	& \;\;\;	\;\;\;	 \textbf{Metric}\;\;\;	& \;\;\;	\;\;\;	 \textbf{Time}\;\;\; & \;\;\;	\;\;\;	\textbf{Space}\;\;\; & \;\;\;	\;\;\;	\textbf{\# Round}\;\;\; & \;\;\;	\;\;\;	 \textbf{Outsourced} \;\;\; \\
\hline
Rao et al.~\cite{Rao2019} & $\chi^2$ & $O(kn)\cdot {\cal C}_{P}$ & $O(kn)$ & $O(1)$ & partially \\
Ono et al.~\cite{Ono2022} & consistency & $O(kn^2)\cdot {\cal C}_{C}$ & $O(kn^2)$ & $1$ & partially \\
Proposed & consistency & $O(kn \log^3n)\cdot {\cal C}_{C}$ & $O(kn)$ & $1$ & fully \\
\hline
\end{tabularx}
\noindent{\footnotesize{${\cal C}_{P}$ and ${\cal C}_{C}$ are costs per single operation depending on the respective cryptosystems~\cite{Paillier1999,Chillotti2020}.  To enable a fair comparison under the same conditions, the computational complexity of Ono et al.~\cite{Ono2022} is described only with respect to feature selection task, omitting the preprocessing time required for data formatting.
The term "partially" refers to decrypting the data in part before obtaining the final result.}}
\end{table}
\section{Preliminaries}
\subsection{Consistency-Based Feature Selection}
Let $D=\{d_1,d_2,\ldots,d_n\}$ be a set of indices of data, associated with a set $F=\{f_1,f_2,\ldots,f_k\}$ of features and a class variable $C$, 
where the feature value $d(f_i)$ and the class label $d(C)$ are defined for each data $d\in D$.
Table~\ref{feature} shows an example of the triple $(D,F,C)$.

The feature selection is to find a minimal subset $F'\subseteq F$ relevant to $C$,
where $F'$ is said to be {\em consistent} if, for any $d,d'\in D$, $d(f_i)=d'(f_i)$ for all $f_i\in F'$ implies $d(C)=d'(C)$,
and $F'$ is minimal, if any proper subset of $F'$ is no longer consistent.

For example, consider the mutual information $I(F';C)$ for evaluating the relevance of $F'$ showing in Table~\ref{feature}.
We can see that $f_1$ is more important than $f_5$ due to the fact $I(f_1;C)>I(f_5;C)$.
$f_1$ and $f_2$ of Table~\ref{feature}  will be chosen to explain $C$ based on the score of $I$.
However, a closer examination of $D$ reveals that $f_1$ and $f_2$ cannot uniquely determine $C$.
In fact, we find $d_2$ and $d_5$ with $d_2(f_1)=d_5(f_1)$ and $d_2(f_2)=d_5(f_2)$ but $d_2(C)\neq d_5(C)$.
On the other hand, we can see that $f_4$ and $f_5$ uniquely determine $C$ using the formula 
$C=f_4\oplus f_5$ while $I(f_4;C)= I(f_5;C) =0$.
It becomes clear that $I(F';C)$ alone may not always lead to appropriate feature selection.

\begin{table}[H]
\caption{
An example triple $(D,F,C)$ shown {in}~\cite{Shin2017}.
} 
\label{feature}
\newcolumntype{C}{>{\centering\arraybackslash}X}
\begin{tabularx}{\textwidth}{CCCCCCC}
\hline
\boldmath{$D$} &\boldmath{$f_1$} & \boldmath{$f_2$} & \boldmath{$f_3$} &\boldmath{$f_4$} & \boldmath{$f_5$} & \;\boldmath{$C$}\; \\ \hline
$d_1$& 1 & 0 & 1 & 1 &  1 & 0 \\
$d_2$& 1 & 1 &0 &0 &0& 0 \\
$d_3$& 0 &0& 0& 1& 1& 0 \\
$d_4$& 1& 0 &1& 0& 0& 0 \\
$d_5$& 1& 1&1& 1 &0& 1 \\
$d_6$& 0 &1 &0& 1& 0 &1 \\
$d_7$& 0& 1& 0& 0 &1 &1 \\
$d_8$& 0 &0 &0& 0& 1& 1 \\ \hline
\;$I(f_i;C)$\;& \;0.189\; & \;0.189\;& \;0.049\;& \;0.000\;& \;0.000\; &  \\ 
\hline
\end{tabularx}
\end{table}

Let us review the notion of the consistency measure employed in this study.
A consistency measure $\mu:2^{F}\to[0,\infty)$ is a function to 
represent how far the data deviate from a consistent state,
where $F$ is required to satisfy {\em determinisity} (i.e., $\mu(F)=0$ iif $F$ is consistent) 
and {\em monotonicity} (i.e., $F\subseteq G$ implies $\mu(F)\geq \mu(G)$).
In this study, we focus on the binary consistency: $\mu_{\rm bin}(F) = 0$, $F$ is consistent; 1, 
otherwise~\cite{Shin2011}.
For other consistency measures, see e.g., rough set~\cite{Pawlak1991}, ICR~\cite{Dash2003}, 
and inconsistent pair~\cite{Arauzo-Azofra2008}.

As Table~\ref{feature} illustrates, the importance of a feature $f_i$ is not determined solely by itself, 
but is influenced by the relative relationship among other features. CWC~\cite{Shin2017} achieves superior feature selection compared to other statistical measures by computing the relative importance of features. Algorithm~\ref{cwc} outlines the procedure of CWC, where the consistency of the candidate set excluding $f_i$ is evaluated to determine whether to select $f_i$. Here, the order in which features $f_i $ are selected significantly affects the result, so it is important to preprocess the triple $(D, F, C)$ in advance. For example, there is a known method that determines the order of $f_i$ by sorting the triple with the values of $f_i$ as keys. This study also adopts that method.

\begin{algorithm}[H]   
\caption{CWC~\cite{Shin2011} over plaintexts~\label{cwc}}                         
                         
\begin{algorithmic}[1] 
\REQUIRE A dataset $(D,F,C)$;
\ENSURE A minimal consistent subset of $F$;
\FOR{$i=1,\ldots,k$} 
\IF{$F\setminus \{f_i\}$\text{ is consistent}}
\STATE update $F\leftarrow F\setminus \{f_i\}$;
\ENDIF
\ENDFOR
\RETURN $F$;
\end{algorithmic}
\end{algorithm}

\subsection{Computation on FHE}
The proposed FHE-based private feature selection is implemented using TFHE~\cite{TFHE}, 
one of the fastest FHE library for bitwise addition (XOR `$\oplus$')
and bitwise multiplication (AND `$\cdot$') over ciphertext.
On TFHE, any $\ell$-bit integer $m=(m_1,m_2,\ldots,m_\ell)$ is encrypted bitwise:
$E[m] \equiv (E[m_1],E[m_2],\ldots,E[m_\ell])$.

Given $E[b]$ and $E[b']$ ($b,b'\in\{0,1\}$), FHE allows to compute the bitwise operations $E[b\oplus b']$ and 
$E[b\cdot b']$ without decrypting $E[b]$ and $E[b']$.
It also allows all arithmetic and logical operations via XOR and AND as follows:
Let $x,y$ represent $\ell$-bit integers and $x_i,y_i$ the $i$-th bit of $x,y$, respectively.
Let $c_i$ represent the $i$-th carry-in bit and $s_i$ the $i$-th bit of the sum $x+y$.
Then, we can obtain the private full-adder $E[x+y]$ using the relations
$s_i = x_i\oplus y_i\oplus c_i$ and $c_{i+1}=(x_i\oplus c_i)\cdot (y_i\oplus c_i)\oplus c_i$.
We can construct other operations such as subtraction, multiplication, and division based on the full-adder.

On the other hand, we can also obtain the private comparison $E[cmp(>,x,y)]$ satifying 
$cmp(>,x,y) = 1$ if $x>y$ and $0$ otherwise, because $cmp(>,x,y)$ is identical to
the most significant bit of $y+(-x)$, which can be obtained without decryption.
Here, $(-x)$ is the bit complement of $x$ obtained by $x_i\oplus 1$ for all $i$-th bits.

Thus, by using FHE, it is possible to perform all arithmetic operations and comparisons for encrypted variables or elements of arrays. 
For simplicity, we represent such operations in plaintext notation throughout the remainder of this paper. 
That is, unless otherwise stated or unless it may cause confusion, an operation such as the addition of two encrypted integers $E[x]$ and $E[y]$, 
resulting in $E[x + y]$, will be denoted simply as $x + y$.

Based on FHE, we can design a private sorting algorithm as follows (e.g.~\cite{Bonnoron2017}).
Note that, in this code, the variables {\tt gt}, {\tt tmp}, and any entry in the array {\tt arr[1..n]} are all encrypted.
Therefore, after this code is executed, the resulting {\tt arr[1..n]} is sorted securely
without revealing the private information about {\tt arr[1..n]}.

\begin{verbatim}
// private bubble_sort over ciphertexts
void bubble_sort(int *arr,int n){
    for(int i=0;i<n-1;i++){
        for(int j=1;j<n-1;j++){
            int gt=cmp(>,arr[j-1],arr[j]); 
            int tmp=gt*arr[j-1]^(!gt)*arr[j]; 
            arr[j-1]=gt*arr[j]^(!gt)*arr[j-1];
            arr[j]=tmp;
        }
    }
}
\end{verbatim}

However, it is difficult to improve the $O(n^2)$ time algorithm to $O(n\log n)$. This is because encrypting pointers is fundamentally meaningless. If pointers are in plaintext, information about the input (e.g., the distribution of values) may be leaked. On the other hand, if the pointers are encrypted, it becomes impossible to access the corresponding memory addresses, and thus computation cannot proceed. Satisfying these conflicting requirements simultaneously appears to be impossible.
This difficulty can be partially avoided by using {\em sorting network}, where comparisons are performed only between fixed pairs of elements, eliminating the need for pointers. Theoretically, we can construct an optimum sorting network that achieves $O(n \log n)$ comparisons~\cite{Ajtai1983}, but the circuit size is impractically large. Therefore, in practice, $O(n \log^2 n)$ algorithm~\cite{Batcher1968} is used. This study also adopts this approach.
\section{Method}
\subsection{Sorting Network on FHE}
Sorting via FHE is typically performed using sorting networks. It remains an open question whether a fastest algorithm such as quicksort or mergesort can be effectively realized in FHE setting. In this study, we adopt the Batcher's odd-even mergesort~\cite{Batcher1968} as the basis for sorting.
Batcher's algorithm employs a recursively constructed sorting network according to the number of elements and achieves sorting in $O(n \log^2 n)$ time. However, it requires that the number of data items be a power of two. Therefore, if the input data size does not meet this condition, we insert appropriate dummy elements beforehand to adjust the size.
Moreover, since sorting network is basically not a stable sort, we attach a $\log n$-bit suffix to each data element to ensure stability. This preprocessing does not affect the correctness or quality of feature selection in any way.

\subsection{A Naive Algorithm for Private CWC}
In this study, we assume that the data analyst (i.e., an algorithm) receives the input triple $(D,F,C)$ in encrypted form, and that the algorithm is capable of performing FHE operations on encrypted data. 

Here, $(D,F,C)$ is possessed as an array, and the algorithm has access to any encrypted entry $d_i(f_j)$ or $d_i(C)$ for $i\in\{1,2,\ldots,n\}$ and $j\in\{1,2,\ldots,k\}$.
The private CWC (denoted pCWC) utilizes sorting $D$ with the feature vectors $\vec{F}=(f_1,f_2,\ldots,f_k)$ formed by $f_i \in F$. 
By sorting $(D,F,C)$ with $\vec{F}$ as the key, identical feature vectors are placed adjacent to each other, significantly speeding up the consistency checking step.

Table~\ref{feature-label} (upper) shows an input $(D,F,C)$ 
and Table~\ref{feature-label} (lower) shows the sorted $(D,F,C)$ where
each original index $d_i$ is renamed by the sorted order $D_j$
and for each $i\in\{1,2,\ldots,k\}$, the corresponding feature label $L[i][j]$ is 
the unique grouping label of $\log n$ bits for $D_j$ defined by the feature vector $\vec{F}[..i]=(f_1, f_2, \ldots, f_i)$.  

For example, Table~\ref{feature-label} (lower), the labels $L_5$ for $D_1$ and $D_2$ are both $000$, indicating that $D_1$ and $D_2$ belong to the same group 
based on identical feature vector values up to $f_5$.
Therefore, the consistency check needs to be performed only for data points that share the same label associated with the tail $f_i$.  
Thus, by using the sorted $(D, F, C)$ along with the corresponding feature labels in $L[1..k][1..n]$, the consistency checking process becomes significantly simplified.

\begin{table}[H]
\caption{
Sorted $(D,F,C)$ and corresponding feature labels $L_i$.
} 
\label{feature-label}
\newcolumntype{C}{>{\centering\arraybackslash}X}
\begin{tabularx}{\textwidth}{CCCCCCC}
\hline
\boldmath{$D$} & $f_1$ & $f_2$ & $f_3$ & $f_4$ & $f_5$ & \;\boldmath{$C$}\; \\ \hline
$d_1$& 1& 0 &1& 0& 0& 1 \\
$d_2$& 0 & 1 & 0 & 0 &  1 & 0 \\
$d_3$& 0 & 1 &0&0 &1& 0 \\
$d_4$& 1 &0& 0& 0& 1& 1 \\
$d_5$& 1& 0&1& 1 &1& 0 \\ 
\hline
\hline
\boldmath{sorted $D$} & $f_1: L[1][1..5]$ & $f_2: L[2][1..5]$ & $f_3: L[3][1..5]$ & $f_4: L[4][1..5]$ & $f_5: L[5][1..5]$ & \;\boldmath{$C$}\; \\ \hline
$D_1(=d_2)$& 0:000 & 1:000 & 0:000 & 0:000 &  1:000 & 0 \\
$D_2(=d_3)$& 0:000 & 1:000 &0:000 &0:000 &1:000& 0 \\
$D_3(=d_4)$& 1:001 &0:001& 0:001& 0:001& 1:001& 1 \\
$D_4(=d_1)$& 1:001& 0:001 &1:010& 0:010& 0:010& 1 \\
$D_5(=d_5)$& 1:001& 0:001&1:010& 1:011 &1:011& 0 \\ 
\hline
\end{tabularx}
\end{table}

Algorithm~\ref{naive-pcwc} shows a naive pCWC computation based on the sorted $(D, F, C)$ and the feature labels.  
Given $(D, F, C)$, consider the evaluation of each feature $f_t$ in $F = \{f_1, f_2, \ldots, f_k\}$.
The algorithm evaluates the consistency on the feature set $F' = F \setminus \{f_t\}$, obtained by removing $f_t$, 
and obtains the corresponding Boolean value $b_t$.  
Specifically, $b_t$ is computed as:
\[
b_t = \bigwedge_{i=2}^n \left( (D_{i-1} \neq D_i) \lor (D_{i-1}(C) = D_i(C)) \right)
\]
Here, $b_t = 1$ iff $f_t$ is selected.  
By reporting the resulting $\vec{b}=(b_1,\ldots,b_k)$ to the data owner, the selected features are obtained
after decrypting $\vec{b}$.

We next analyze the complexity of our algorithm where 
in the following discussion, we assume that the cost of each homomorphic operation under FHE is constant. 
That is, the computational cost of arithmetic and comparison operations on encrypted data is treated as $O(1)$, unless otherwise specified.

\begin{algorithm}[H]   
\caption{Naive pCWC for ciphertexts}                         
\label{naive-pcwc}                          
\begin{algorithmic}[1] 
\REQUIRE An encrypted $(D,F,C)$;
\ENSURE A minimal consistent subset of $F$;
\FOR{$t=k,k-1,\ldots,1$} 
\STATE sort $(D,F',C)$ with $\vec{F'}$ as the key for $F'=F\setminus \{f_t\}$;
\STATE compute $L[j][i]$ for all $j=1,2,\ldots,k$ and $i=1,2,\ldots,n$;
\STATE compute the consistency $b_t\in\{0,1\}$ of $(D,F',C)$;
\STATE update $D_i(f_t)\leftarrow D_i(f_t)\cdot b_t$ for all $i=1,2,\ldots,n$;
\ENDFOR
\RETURN $\vec{b}=(b_1,b_2,\ldots,b_k)$ representing the selected subset of $F$;
\end{algorithmic}
\end{algorithm}

\begin{theorem}~\label{Th1}
Algorithm~\ref{naive-pcwc} (naive pCWC) on ciphertexts simulates Algorithm~\ref{cwc} (CWC) on plaintexts.
The time and space complexities are $O(k^2n\log^3 n)$ and $O(kn)$ for $|F|=k$ and $|D|=n$, respectively.
\end{theorem}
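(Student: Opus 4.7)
The plan is to prove correctness by induction on the outer loop variable $t$, then to bound the time and space separately by decomposing per-iteration cost.

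For correctness, the induction hypothesis I would carry is that at the moment the iteration for $t+1$ finishes, the encrypted matrix $D$ encodes the same data as Algorithm~\ref{cwc} at the corresponding step, except that every column $f_j$ with $j>t$ that the plaintext algorithm has chosen to drop is replaced by an all-zero ciphertext column. The only nonstandard move in the simulation is that we cannot physically delete a column from the encrypted matrix, so the step ``update $D_i(f_t)\leftarrow D_i(f_t)\cdot b_t$'' zeros the column instead. The key observation is that zeroing a column is semantically equivalent, for all subsequent consistency checks, to deleting it: two rows either already agreed on that column (no change) or now agree trivially, while their class labels and their values on the remaining columns are untouched. Consequently the sort in the next iteration induces exactly the same grouping of rows as the plaintext CWC would on the reduced feature set, and the closed-form expression for $b_t$ is the standard encoding of ``no two adjacent rows in the sorted order agree on the key but disagree on the class,'' which is precisely the consistency predicate. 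The labels $L[j][i]$ are an efficient representation of those groupings read off from the sorted data.

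For complexity, I would decompose each iteration into (i) the Batcher odd-even mergesort, (ii) the label computation, and (iii) the evaluation of $b_t$ and the column update. The sort operates on $n$ items whose keys concatenate the up-to-$k$-bit feature vector and the $\log n$-bit stability tag, so each of the $O(n\log^2 n)$ comparator gates costs $O(k+\log n)$ FHE operations via the full-adder-based comparison circuit from the preliminaries; this gives $O(n\log^2 n\cdot(k+\log n))$ per iteration. Labels can be computed incrementally in $O(kn\log n)$ per iteration using an $O(\log n)$-bit running counter per column; the consistency aggregation and the column update are $O(n\log n)$ and $O(n)$, respectively. Summing over the $k$ outer iterations and absorbing lower-order terms into the worst-case bound yields $O(k^2 n\log^3 n)$. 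For space, at any moment we only need the data matrix ($O(kn)$ ciphertexts), the label array ($O(kn)$ entries), and $O(n)$ scratch for the consistency pass; stale copies created during sorting can be discarded in place, giving $O(kn)$ overall.

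The main obstacle I expect is the correctness of the zero-out trick together with threading the stability requirement through the argument: one has to verify that appending distinct $\log n$-bit suffixes to make the sort stable neither spuriously equates rows in the consistency predicate (since the suffixes are all distinct, adjacent rows with different suffixes never register as equal) nor inflates the complexity beyond the claimed bound. Once that invariant is in hand, the inductive simulation step and the cost accounting both reduce to routine bookkeeping over the per-iteration subtasks.
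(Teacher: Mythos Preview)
Your proposal is correct and follows essentially the same approach as the paper: both argue that the multiplicative update $D_i(f_t)\leftarrow D_i(f_t)\cdot b_t$ effectively deletes $f_t$ for all later consistency checks, and both derive the $O(k^2 n\log^3 n)$ time bound from $k$ iterations of a Batcher sort on $n$ rows with $O(k)$-bit feature keys plus a $\lceil\log n\rceil$-bit stability suffix. Your inductive invariant and explicit justification of the zero-out trick are actually more careful than the paper's terse argument; the only wrinkle is your parenthetical about the stability suffixes, where the real concern to discharge is spurious \emph{inequality} (the equality test inside $b_t$ must be taken over $F'$ alone, excluding the tag), not spurious equality.
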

\begin{proof}
Assuming that $(D, F', C)$ is already sorted, the feature label $L[t][i]$ of $D_i$ for $\vec{F}[..t]$ can be computed based on 
whether $D_{i-1}(..f_t) = D_i(..f_t)$ holds, where $D_i(..f_t)$ stands for $D_i(f_1)\cdots D_i(f_t)$.
When the predecessor $L[t][i-1]$ for $D_{i-1}$ is already defined, using the logical bit $x$ to indicate 
whether $D_{i-1}(..f_t) = D_i(..f_t)$, the next $L[t][i]$ is defined as:
$L[i] \leftarrow L[i-1] + \neg{x}$.
These computations can be performed on FHE.
Moreover, the logical bit $b_t$, which determines whether the feature $f_t$ should be selected, can also be computed by FHE according to its definition.  
By updating each $f_t$ as: $D_i(f_t) \leftarrow D_i(f_t) \cdot b_t$,
we can effectively remove $f_t$ from $(D, F, C)$.
Thus, Algorithm~\ref{naive-pcwc} correctly simulates the original CWC on plaintexts.

When sorting the encrypted $(D, F', C)$ using a sorting network, the number of comparisons required is $O(kn \log^2 n)$.  
Since our method appends a $\lceil \log n\rceil$-bit suffix to each data entry to achieve stable sorting, 
the computation time for one feature $f_t$ becomes $O(kn \log^3 n)$.  
Therefore, the total computation time across all features is $O(k^2 n \log^3 n)$.
The space complexity is clear.
\end{proof}

\subsection{Improved Private CWC}
In the naive approach, sorting of $(D, F', C)$ is iterated for each $F'=F \setminus \{f_t\}$ with specified $f_t\in F$,
resulting in high computational cost. Therefore, we aim to accelerate this algorithm. 
It should be noted that, since $(D, F, C)$ are encrypted, pointer-based referencing cannot be used. 
In this study, we propose a method that avoids full sorting at each step by partitioning $F$ 
into two sequences---selected features and unprocessed features---and processing them independently before merging the results.

The key concept here is to utilize the feature labels in $L[t][1..n]$ introduced in Algorithm~\ref{naive-pcwc} for partial sorting. 
Suppose that we are focusing on a current feature $f_t$. 
That is, feature selection has already been completed for the suffix vector $\vec{F}[t+1..]=(f_{t+1}, f_t,\ldots, f_k)$. 
In this case, we define the feature labels for $\vec{F}[t+1..]$ and store them in the array $PostL[1..n]$ as follows: 
$PostL[i] = PostL[i']$ iff $D_i(f_{j+1}..) = D_{i'}(f_{j+1}..)$. 
This groups the suffix sequences of $F$ that share the same feature vector. 

As shown in Algorithm~\ref{naive-pcwc}, feature labels for any prefix vector $\vec{F}[..t]$ are stored in the array $L[t][1..n]$. 
Therefore, to group $(D, F \setminus \{f_t\}, C)$, we first independently compute $L[t][1..n]$ for prefix vector $\vec{F}[..t]$
and $PostL[1..n]$ for the corresponding suffix vector $\vec{F}[t+1..]$, 
and then calculate the ranks of $L[t][1..n]$ and $PostL[1..n]$ to merge them correctly. 
This approach enables us to avoid iterative sorting for the entire $(D, F', C)$.

\begin{algorithm}[H]   
\caption{Improved pCWC for ciphertexts}                         
\label{improved-pcwc}                          
\begin{algorithmic}[1] 
\REQUIRE An encrypted $(D,F,C)$;
\ENSURE A minimal consistent subset of $F$;
\STATE sort $(D,F,C)$ with $\vec{F}$ as the key;
\STATE compute the feature label $L[j][i]$ for all $j=1,2,\ldots,k$ and $i=1,2,\ldots,n$;
\STATE initialize $PostL[1..n]\leftarrow (0,0,\ldots,0)$ and $MapL[1..n]\leftarrow (1,2,\ldots,n)$;
\FOR{$t=k,\ldots,1$} 
\STATE sort $(f_{t+1}[1..n], PostL[1..n]$, $C[1..n]$, $MapL[1..n])$ as the key $f_{t+1}[1..n]$;
\STATE update $PostL[1..n]$ by the sorted $f_{t+1}[1..n]$;
\STATE sort $(f_t, L[t-1][1..n])$ by the inverse $MapL^{-1}[1..n]$;
\STATE compute the consistency $b_t\in\{0,1\}$ of $(L[t-1][1..n],PostL[1..n],C[1..n])$;
\STATE update $f_t: D_\ell(f_t)\leftarrow D_\ell(f_t)\cdot b_t$ for all $\ell=1,2,\ldots,n$;
\ENDFOR
\RETURN $\vec{b}=(b_1,b_2,\ldots,b_k)$ representing the selected subset of $F$;
\end{algorithmic}
\end{algorithm}

The task of Algorithm~\ref{improved-pcwc} is divided into three phases: preprocessing on $(D, F, C)$ (Figure~\ref{pcwc-1}), 
partial sorting of $(D, F, C)$ with respect to a specific feature $f_t$ (Figure~\ref{pcwc-2}), 
and the decision-making and update of $(D, F, C)$ based on $f_t$ (Figure~\ref{pcwc-3}), respectively.

Figure~\ref{pcwc-1} illustrates the preprocessing of $(D, F, C)$. Here, it is assumed that the sorting of $(D, F, C)$ has already been completed, 
and each $D_i$ represents its rank in the sorted order. For this sorted $(D, F, C)$, a label $L_t$ is assigned to each $D_i$ based on $\vec{F}[..t]$, 
such that $L[t][i] = L[t][j]$ iff $D_i(..f_t) = D_j(..f_t)$. 
That is, $L[t][1..n]$ group labels for $D_i$s sharing the same value of $\vec{F}[..t]$.

Next, we focus on a specific feature $f_t$ and determine whether it should be selected. 
For this purpose, we require the sorting result of $(D, F', C)$, where $F' = F \setminus \{f_t\}$ (see Figure~\ref{pcwc-2}). 
However, since $\vec{F}[..t-1]$ has already been sorted, we divide $F$ into $\vec{F}[..t-1]$, $f_t$, and
$\vec{F}[t+1..]$ with the current $f_t$, and only $\vec{F}[t+1..]$ is sorted .
Because the sorting of $\vec{F}[..t-1]$ is performed sequentially for $t = k, k-1, \ldots, t$, it only needs to be executed for the current value of $f_t$. 
As a result, the computational cost is significantly reduced compared to naively sorting both the prefix and suffix for every possible $f_t$.

However, performing sorting on $\vec{F}[..t-1]$ breaks the alignment between the previously synchronized $\vec{F}[..t-1]$ and $\vec{F}[t+1..]$. 
To keep this correspondence, we introduce a reference array $MapL[1..n]$ initialized by $(1,2,\ldots,n)$, and sort $(PostL[1..n],MapL[1..n])$ simultaneously. 
We then compute the inverse array $MapL^{-1}[1..n]$ by sorting $(1, 2, …, n)$ with $MapL[1..n]$ as the key,
and by sorting $MapL^{-1}[1..n]$ together with $\vec{F}[t+1..]$, 
the synchronization between the prefix and corresponding suffix of $\vec{F}$ is maintained.
Through this procedure, we obtain the correctly sorted result of $(D, F', C)$ without $f_t$ excluded.

Figure~\ref{pcwc-3} illustrates the task for verifying the consistency of the sorted $(D, F', C)$ and determines whether $f_t$ is a necessary feature. 
Since this information is extracted as an encrypted logical bit $b_t$, the decision can be reflected back to $(D, F, C)$ by computing the product of $b_t$ with all values of $f_t$.
These operations are carried out sequentially in the order $t = k, k-1, \ldots, 1$, and by decrypting the resulting bit vector $\vec{b}=(b_1, b_2, \ldots, b_k)$, 
we obtain the final feature selection result.

\begin{figure}[H]
\begin{center}
\includegraphics[width=14cm]{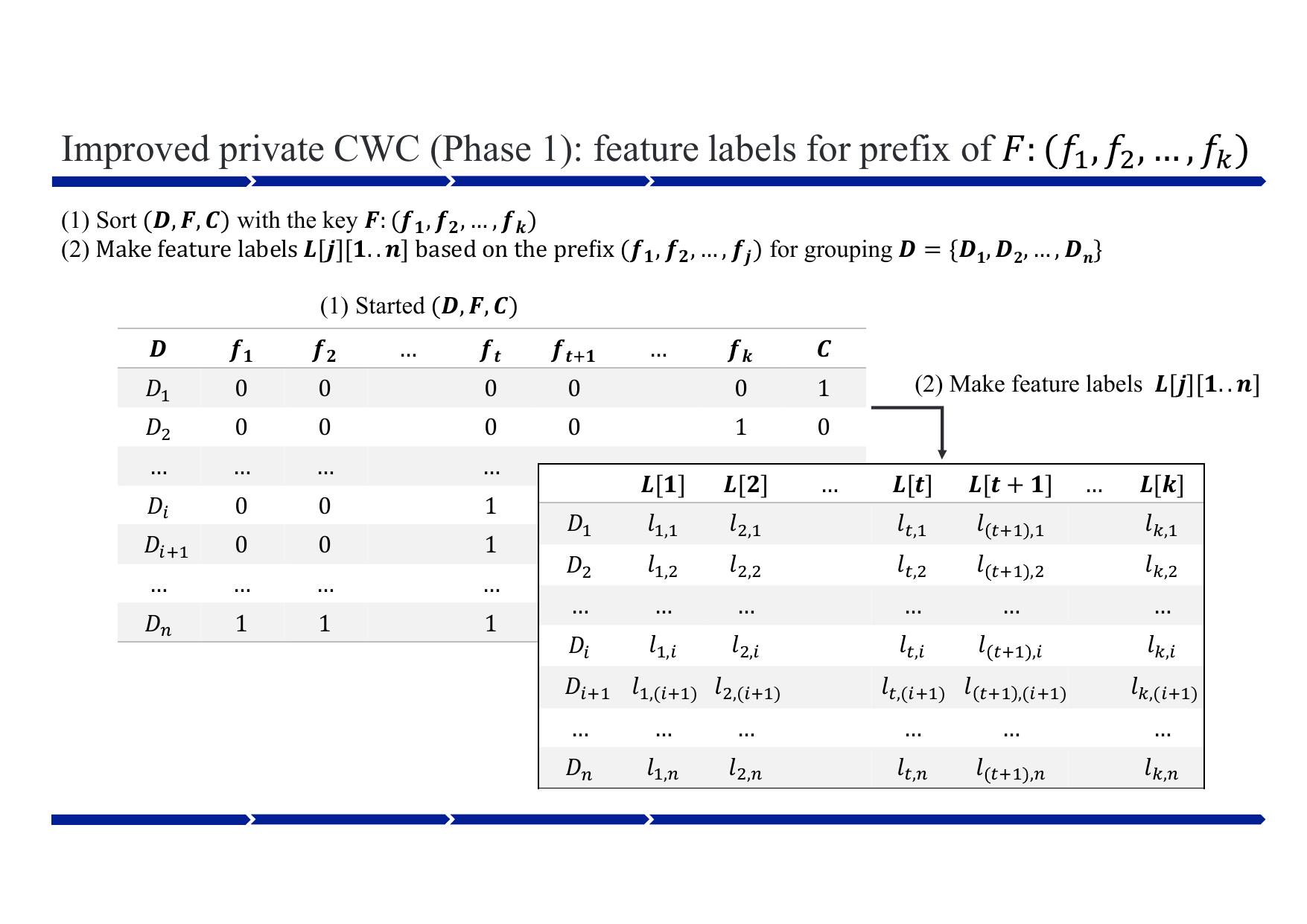}
\end{center}
\vspace{-1cm}
\caption{Phase 1 of the improved algorithm corresponding to Line 1 - 3 in Algorithm~\ref{improved-pcwc}.~\label{pcwc-1}}
\end{figure}   

\begin{figure}[H]
\begin{center}
\includegraphics[width=14cm]{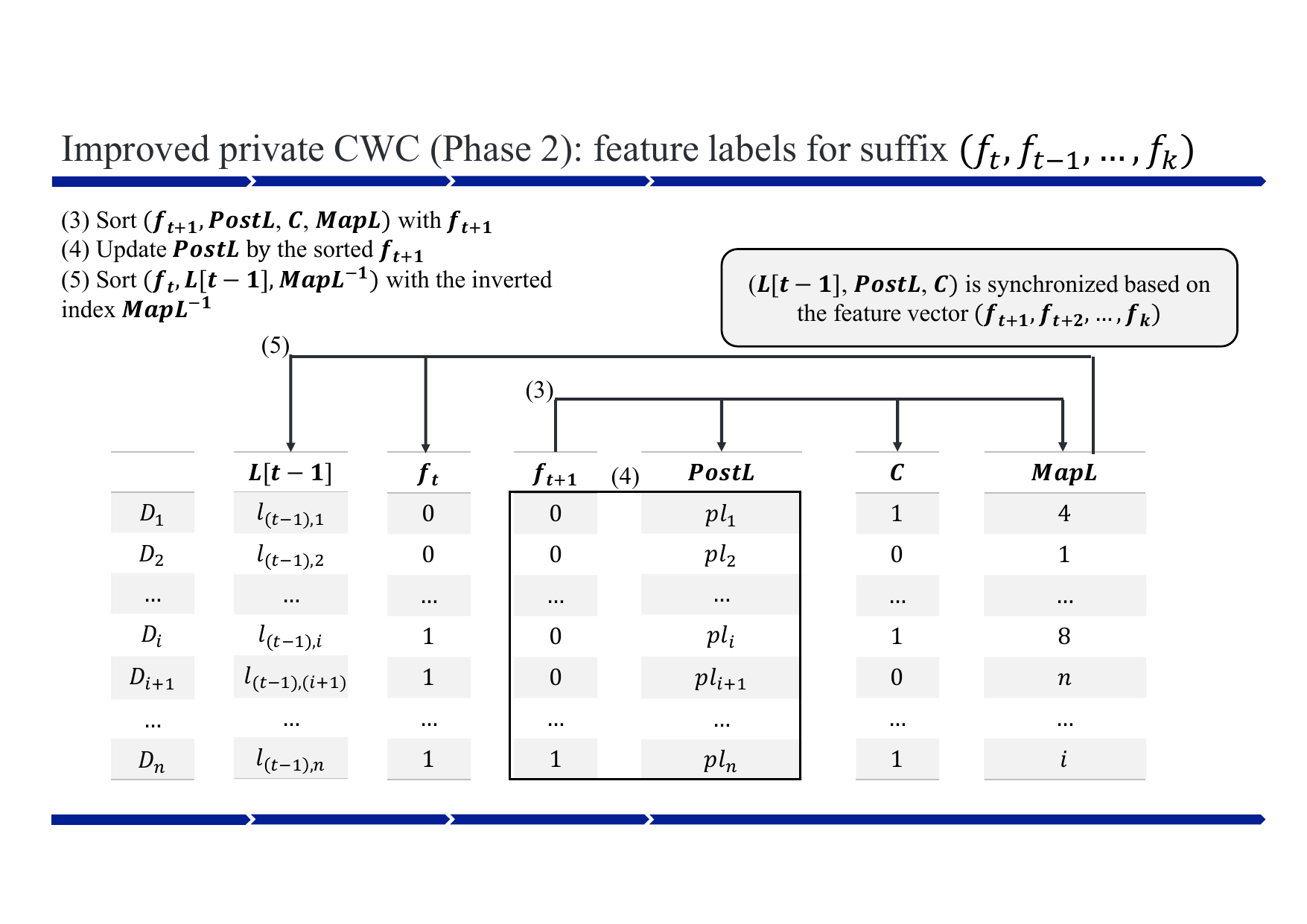}
\end{center}
\vspace{-1cm}
\caption{Phase 2 (Line 4 -7) of Algorithm~\ref{improved-pcwc}.~\label{pcwc-2}}
\end{figure}   

\begin{figure}[H]
\begin{center}
\includegraphics[width=14cm]{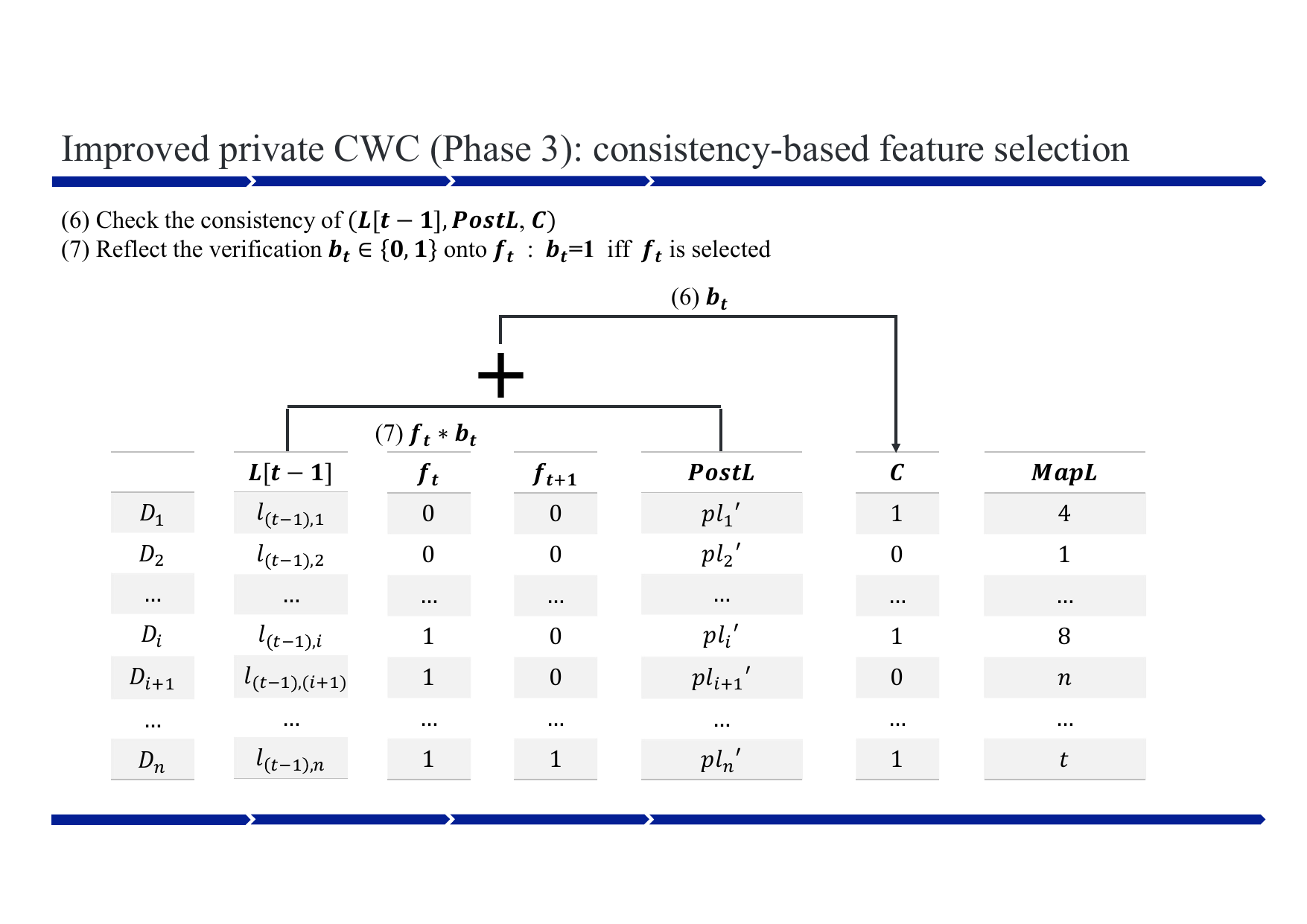}
\end{center}
\vspace{-1cm}
\caption{Phase 3 (Line 8 - 9) of Algorithm~\ref{improved-pcwc}.~\label{pcwc-3}}
\end{figure}   

\begin{theorem}~\label{Th2}
Algorithm~\ref{improved-pcwc} (improved pCWC) on ciphertexts simulates Algorithm~\ref{cwc} (CWC) on plaintexts.
The time and space complexities are $O(kn\log^3 n)$ and $O(kn)$ for $|F|=k$ and $|D|=n$, respectively.
\end{theorem}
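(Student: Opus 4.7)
The plan is to establish correctness and the complexity bounds separately, leveraging the preprocessing analysis already carried out for Theorem~\ref{Th1}.

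For correctness, I would prove by downward induction on $t = k, k-1, \ldots, 1$ the following loop invariant: at the point where Line 8 is reached for index $t$, there is a permutation $\sigma_t$ of $\{1,\ldots,n\}$ such that for every $i$, position $i$ of the three arrays $L[t-1][\cdot]$, $PostL[\cdot]$, $C[\cdot]$ refers to the same physical data point $D_{\sigma_t(i)}$, with $L[t-1][i]$ a grouping label for $\vec{F}[..t-1]$ and $PostL[i]$ a grouping label for $\vec{F}[t+1..]$. Given the invariant, two data points agree on $F \setminus \{f_t\}$ iff they share both labels, so the expression
\[
b_t = \bigwedge_{i=2}^{n}\bigl(\bigl(L[t-1][i-1], PostL[i-1]\bigr) \neq \bigl(L[t-1][i], PostL[i]\bigr) \;\lor\; C[i-1] = C[i]\bigr)
\]
correctly evaluates consistency of $F \setminus \{f_t\}$, exactly as the naive Algorithm~\ref{naive-pcwc} does. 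The multiplicative update at Line 9 then zeros $f_t$ whenever $b_t=0$, faithfully simulating the plaintext deletion step of Algorithm~\ref{cwc} and preserving the invariant for iteration $t-1$.

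To establish the invariant across one iteration I would argue as follows. Line 5 stably sorts the parallel arrays $(f_{t+1}, PostL, C, MapL)$ with key $f_{t+1}$; since $PostL$ previously grouped data by $\vec{F}[t+2..]$, stability implies that after Line 6 the refreshed $PostL$ is a correct grouping label for $\vec{F}[t+1..]$, and the updated $MapL$ records the cumulative permutation applied to the suffix side. Line 7 sorts the tuple $(f_t, L[t-1][\cdot])$ with key $MapL^{-1}$---computed by sorting $(1,2,\ldots,n)$ keyed on $MapL$---thereby permuting the prefix-side arrays by the same permutation, which restores the per-index alignment required by the invariant.

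For complexity, Phase 1 performs one Batcher sort of $n$ records whose keys are $O(k)$ feature bits extended by an $O(\log n)$ stability suffix, giving $O(n \log^2 n)$ compare-exchanges at $O(k + \log n)$ bit cost each, i.e., $O(kn \log^3 n)$ in the worst case; the prefix labels $L[j][i]$ are then filled in $O(kn)$ additional comparisons. Each of the $k$ main-loop iterations executes a constant number of Batcher sorts over $n$ records whose payloads consist of $O(1)$ entries of width $O(\log n)$, each such sort costing $O(n \log^3 n)$; the AND-fold in Line 8 and the update in Line 9 contribute only $O(n \log n)$ and $O(n)$, respectively. Summed over $k$ iterations the loop cost is $O(kn \log^3 n)$, which dominates the preprocessing. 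Space is $O(kn)$ cells for the encrypted $(D,F,C)$ and the label table $L[1..k][1..n]$, plus $O(n)$ for the auxiliary arrays $PostL$, $MapL$, and $\vec{b}$.

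The hard part I expect is the alignment argument in Phase 2: rigorously tracking how the stable sort of Line 5, the derived $MapL^{-1}$ of Line 7, and the pre-existing order of $L[t-1]$ compose into a single permutation $\sigma_t$ that simultaneously validates both labels. Once this bookkeeping is formalized, the consistency check, the multiplicative zeroing, and the Batcher network analysis reduce to routine calculations paralleling the proof of Theorem~\ref{Th1}.
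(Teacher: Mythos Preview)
Your plan mirrors the paper's proof closely: both argue correctness by showing that $PostL$ is incrementally refined via the stable sort on $f_{t+1}$ and that the $MapL/MapL^{-1}$ mechanism realigns the prefix labels with $PostL$, and both obtain the $O(kn\log^3 n)$ bound by charging Phase~1 one full Batcher sort and each of the $k$ iterations a constant number of length-$n$ sorts on $O(\log n)$-bit keys.

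There is, however, one concrete gap in your correctness argument. Your loop invariant asserts only that $L[t-1][\cdot]$, $PostL[\cdot]$, and $C[\cdot]$ are \emph{aligned} under a common permutation $\sigma_t$ and that each array is a valid grouping label. From that alone the formula
\[
b_t \;=\; \bigwedge_{i=2}^{n}\Bigl(\bigl(L[t-1][i-1],PostL[i-1]\bigr)\neq\bigl(L[t-1][i],PostL[i]\bigr)\;\lor\;C[i-1]=C[i]\Bigr)
\]
does \emph{not} follow, because this expression only inspects adjacent pairs; you additionally need that under $\sigma_t$ all records sharing the same $(L[t-1],PostL)$ value are \emph{contiguous}. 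The paper handles this implicitly by asserting that Algorithm~\ref{improved-pcwc} ``correctly sorts $(D,F\setminus\{f_t\},C)$''; in your framework you must strengthen the invariant to say that $\sigma_t$ orders the records lexicographically by the current suffix $\vec{F}[t+1..]$ with ties broken by the Phase-1 order (and hence by $\vec{F}[..t-1]$). This stronger statement is exactly what the chain of stable sorts in Line~5 produces, and it follows from the same stability bookkeeping you already sketch---so the fix is local, but without it the step ``so the expression \ldots\ correctly evaluates consistency'' is unjustified.
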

\begin{proof}
We first show the correctness of Algorithm~\ref{improved-pcwc}, that is it correctly simulates Algorithm~\ref{naive-pcwc}.  
To establish this, it is sufficient to prove that for any feature $f_t$, Algorithm~\ref{improved-pcwc} 
(1) correctly sorts $(D, F \setminus \{f_t\}, C)$, and (2) accurately determines the consistency of the sorted $(D, F \setminus \{f_t\}, C)$.

As a result of preprocessing, for any $t$, $(D, F, C)$ is already sorted using the prefix vector $\vec{F}[..t]=(f_1, f_2,\ldots, f_t)$ as the key, 
and thus the correct label $L[t][i]$ for $D_i$ has been computed.

Now, assuming that the feature labels $PostL[1..n]$ corresponding to the suffix $\vec{F}[t+2..]$ has been computed in the previous loop, 
we can obtain the updated ranks for the extended suffix $\vec{F}[t+1..]$ by sorting $PostL[1..n]$ using $f_{t+1}$ as the key. 
Then, by updating $PostL[1..n]$ based on the sorted values and their corresponding $f_{t+1}$ values, 
we can compute the new $PostL[1..n]$ that reflects the extended suffix $\vec{F}[t+1..]$.
All of these computations can be performed under FHE, in the same manner as in Algorithm~\ref{naive-pcwc}.

The next necessary step is to restore the correspondence between $L[t-1][1..n]$ and $PostL[1..n]$. 
The array $L[t-1][1..n]$ has retained its original order from the initial sorting for prefixes, whereas the current $PostL[1..n]$ 
has been reordered according to the current suffix.
Suppose that the previous $PostL[1..n]$ corresponding to the suffix $\vec{F}[t+2..]$ was already synchronized with $L[t-1][1..n]$.

To preserve the previous ordering of $PostL[1..n]$ (i.e., the ordering synchronized with $L[t-1][1..n]$), 
we first sort the initialized index array $MapL[1..n] = (1, 2, \ldots, n)$ together with $PostL[1..n]$. 
Then, since the original positions of $MapL[1..n]$ are retained in the inverse index array $MapL^{-1}[1..n]$, 
we can restore the correct correspondence between $L[t-1][1..n]$ and $PostL[1..n]$ by sorting $L[t-1][1..n]$ using $MapL^{-1}[1..n]$ as the key.
All of these operations can be performed solely through sorting on encrypted data under FHE.

Moreover, since both $f_{t+1}$ and $C$ are always kept in synchronization with $PostL[1..n]$, and each $f_t$ is always sorted in accordance with 
$L[t-1][1..n]$, Algorithm~\ref{improved-pcwc} correctly sorts $(D, F \setminus \{f_t\}, C)$. 
The consistency check applied to the sorted $(D, F \setminus \{f_t\}, C)$ is identical to that in Algorithm~\ref{naive-pcwc}. 
Therefore, we conclude that the improved algorithm correctly simulates the naive one.

Next, we evaluate the computational complexity.  
The sorting of $(D, F, C)$ in Lines~2--4 is identical to that of Algorithm~\ref{naive-pcwc}, and thus requires $O(kn \log^3 n)$ time.  

Among Lines~6--15, the most computationally expensive operations are the sorting steps in Line~8 and Line~10.  
In Line~8, sorting is performed using $f_{t+1}$ as the key. To ensure stability, a $\log n$-bit suffix is appended to each entry, 
allowing the operation to be treated as sorting integers of $O(\log n)$ bits. Therefore, the time complexity is $O(n \log^3 n)$.

In Line~10, sorting is performed using $MapL[1..n]$ as the key. Since $MapL[1..n]$ is initialized as $(1, 2, \ldots, n)$, 
no additional suffix is required for stable sorting, and this operation also runs in $O(n \log^3 n)$ time.
Because the above process is repeated $k$ times, the overall computational time is $O(kn \log^3 n)$.

Finally, the additional data structures used by Algorithm~\ref{improved-pcwc}, besides $(D, F, C)$, 
include $L[1..k][1..n]$, $PostL[1..n]$, $MapL[1..n]$, and $MapL^{-1}[1..n]$.  
Since each of these has a size of $O(kn)$, the total space complexity of the algorithm is also $O(kn)$.
\end{proof}
\section{Experimental Results}
We implemented the proposed private CWC (Algorithm~\ref{improved-pcwc}) in C++. 
For FHE operations, we used TFHE library~\cite{TFHE}.  
The experiments were conducted on a machine equipped with an Intel(R) Core(TM) i9-10900X CPU running at 3.70GHz and 32~GB of memory.  
We used \texttt{gcc} version 11.4.0 as the compiler.  
All experiments were executed inside a Docker container specifically built for this purpose.

To evaluate the effectiveness of the proposed method (Algorithm~\ref{improved-pcwc}), we compared it with the naive baseline approach (Algorithm~\ref{naive-pcwc}). 
In the baseline algorithm, label computation becomes a bottleneck due to the need for $k$ sorting operations, which makes the method impractical for large values of $k$. 
To address this issue, a decision-tree-based data structure can be used to classify encrypted feature vectors 
and assign labels without sorting~\cite{Bost2015,Paul2022}. 
Assuming this data structure, the computational complexity of the baseline algorithm becomes $O(kn^2 \log n)$, 
and then, the total complexity of Algorithm~\ref{naive-pcwc} can be considered as $O(\min\{ k^2 n \log^3 n, kn^2 \log n \})$.  
In contrast, the improved method (Algorithm~\ref{improved-pcwc}) has a complexity of $O(kn \log^3 n)$.  
We experimentally compared the execution times of these algorithms while varying $k$ and $n$.

\begin{figure}[H]~\label{time}
\centering
\begin{minipage}[b]{0.45\textwidth}
\centering
\includegraphics[width=\textwidth]{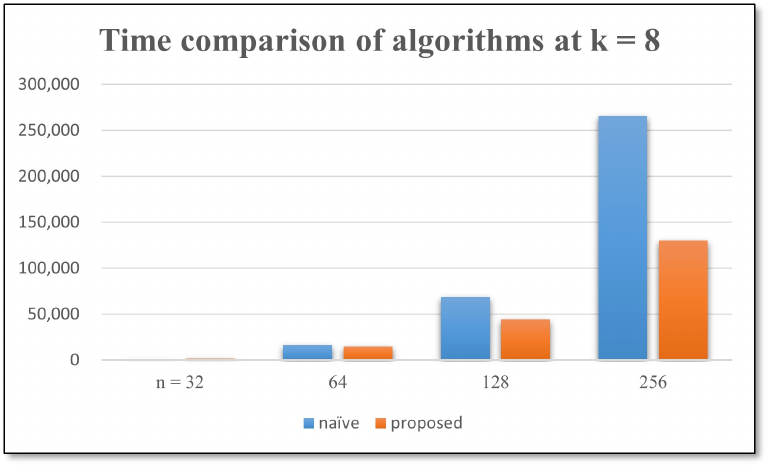}
\caption{Time (sec.) comparison of proposed and naive algorithms}~\label{time-1}
\end{minipage}
\hfill
\begin{minipage}[b]{0.45\textwidth}
\centering
\includegraphics[width=\textwidth]{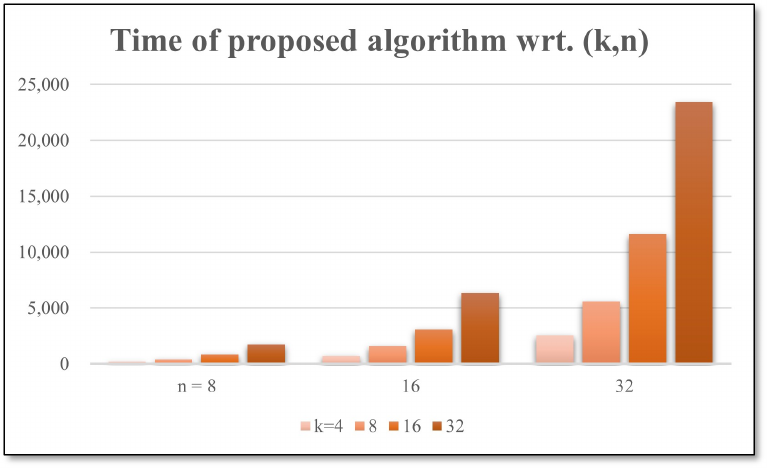}
\caption{Time (sec.) of proposed algorithm w.r.t. parameters}~\label{time-2}
\end{minipage}
\end{figure}   

Figure~\ref{time-1} compares the computation time of the proposed algorithm, which has a theoretical complexity of $O(kn \log^3 n)$, 
with a naive implementation that uses decision-tree-based secure computation~\cite{Bost2015,Alabdulkarim2019,Paul2022} and has a complexity of $O(kn^2 \log n)$.
In this experiment, we measured the computation time of both algorithms as the number of records $n$ increases.  
The results clearly show that the proposed method reduces computation time even for relatively small values of $n$, 
and this advantage becomes more pronounced as $n$ increases.

On the other hand, Figure~\ref{time-2} shows the impact of the parameters $(k, n)$ on the computation time of the proposed algorithm.  
In this experiment, we compared the execution time for $k \in \{4, 8, 16, 32\}$ and $n \in \{8, 16, 32\}$.  
Although the theoretical time complexity of the proposed method is $O(kn \log^3 n)$, 
the results empirically confirm that the computation time increases linearly with $k$.
Therefore, it is confirmed that the proposed privacy-preserving computation algorithm behaves as expected by design.
\section{Discussion}
In this study, we proposed a feature selection algorithm based on FHE.  
While various plaintext feature selection methods have been studied, 
consistency-based feature selection has been shown to offer both scalability and effectiveness. 
Therefore, the development of privacy-preserving computation protocols for consistency-based feature selection is a critical research challenge.

Although existing FHE-based two-party algorithms have addressed this goal to some extent, the protocol proposed in this work is the first to realize 
{\em fully outsourced} computation for consistency-based feature selection.  
We demonstrated that our proposed privacy-preserving algorithm is superior both theoretically and experimentally compared to 
a naive approach that uses decision-tree-based data structures under secure computation.

On the other hand, the proposed algorithm has several directions for future improvement.  
First, this work assumes that the input data $(D, F, C)$ is binary. 
Removing this assumption and extending the algorithm to support multi-valued or symbolic attribute domains is one of the most important future enhancements.  
Since TFHE library~\cite{TFHE} used in this study, is optimized for bitwise and integer operations, 
alternative libraries or methods capable of handling real-valued data more efficiently should be considered.
In addition, while this study assumes a point-to-point computation model between data owner and analyst, 
extending the model to support one-to-many computations is also important for broader applicability.  
Such an extension could be realized by leveraging multi-key FHE~\cite{Ma2021,MKHE_CCS2019,Cheon2025,Namazi2025}.

By addressing these aspects, the proposed algorithm can be extended and adapted to enable privacy-preserving feature selection 
across a wide range of application domains such as FHE-based machine learning~\cite{Yuan2025,Naresh2025,Babu2025,Kolhar2023,Xiao2019} 
and federated machine learning~\cite{Firdaus2025,Zhang2025,Zhao2023,Walskaar2023}.



\end{document}